\newif\ifcomments
\newif\ifchanges
\newcommand{\ahhNA}{{\normalfont{\textsc{AHH}}}\xspace}
\newcommand{\ahh}[1]{{\normalfont{\textsc{AHH}$(#1)$}}\xspace}
\newcommand{\querygraph}{\calQ}
\newcommand{\datagraph}{\calD}
\newcommand{\RelName}[1]{\mtext{#1}}
\newcommand{\tpl}{\bar}
\newcommand{\mtext}[1]{\textsc{#1}}
\newcommand{\ins}{\mtext{ins}\xspace}
\newcommand{\del}{\mtext{del}\xspace}
\newcommand{\schema}{\ensuremath{\tau}\xspace}
\newcommand{\domain}{\ensuremath{D}}
\newcommand{\query}{\ensuremath{Q}}
\newcommand{\db}{\ensuremath{\calD}\xspace}%
\newcommand{\inp}{\ensuremath{\calI}\xspace}
\newcommand{\aux}{\ensuremath{\calA}\xspace}%
\newcommand{\ans}{\RelName{Ans}}
\newcommand{\N}{\ensuremath{\mathbb{N}}}
\newcommand{\df}{\ensuremath{\mathrel{\smash{\stackrel{\scriptscriptstyle{
    \text{def}}}{=}}}} \;}
\newcommand{\ut}[4]{
  \@ifmtarg{#4}{t^{#1}_{#2}(#3) }{t^{#1}_{#2}(#3; #4)}
}
\newcommand{\ite}[3]{
  \@ifmtarg{#1}{
    \mtext{ITE}
   }{
    \mtext{ITE}\text{$(#1,#2,#3)$}  
  }
}
\newcommand  {\myclass} [1]  {\ensuremath{\textsf{\upshape #1}}}
\newcommand{\StaClass}[1]{\myclass{#1}\xspace}
\newcommand{\DynClass}[1]{\myclass{Dyn#1}\xspace}
\newcommand  {\myproblem} [1] {\normalfont{\textsc{#1}}\xspace}
\newcommand     {\LOGSPACE}     {\StaClass{LOGSPACE}}
\newcommand     {\NL}   {\StaClass{NL}}
\newcommand     {\PTIME}    {\StaClass{PTIME}}
\newcommand     {\NP}   {\StaClass{NP}}
\newcommand     {\LOGCFL}   {\StaClass{LOGCFL}}
\newcommand{\ACz}{\mbox{\myclass{AC}$^0$}\xspace}
\newcommand{\ACo}{\mbox{\myclass{AC}$^1$}\xspace}
\newcommand{\SACo}{\mbox{\myclass{SAC}$^1$}\xspace}
\newcommand{\FO}{\StaClass{FO}}
\newcommand{\MSO}{\StaClass{MSO}}
\newcommand{\CQ}[1][]{\StaClass{CQ}}
\newcommand{\UCQ}[1][]{\StaClass{UCQ}}
\newcommand{\CQneg}[1][]{\StaClass{CQ\ensuremath{^{\mneg}}}}
\newcommand{\UCQneg}[1][]{\StaClass{UCQ\ensuremath{^{\mneg}}}}
\newcommand{\mneg}{\neg} %
\newcommand{\DynFO}{\DynClass{FO}}
\let\llncsproof\proof
\renewcommand{\proof}[1][]{%
  \ifx!#1!\else\renewcommand{\proofname}{#1}\fi
  \llncsproof
}
\newenvironment{proofsketch}{\begin{proof}[Proof sketch]}{\end{proof}}
\providecommand {\calA}      {{\mathcal A}\xspace}
\providecommand {\calC}      {{\mathcal C}\xspace}
\providecommand {\calD}      {{\mathcal D}\xspace}
\providecommand {\calE}      {{\mathcal E}\xspace}
\providecommand {\calH}      {{\mathcal H}\xspace}
\providecommand {\calI}      {{\mathcal I}\xspace}
\providecommand {\calP}      {{\mathcal P}\xspace}
\providecommand {\calQ}      {{\mathcal Q}\xspace}
\providecommand {\calV}      {{\mathcal V}\xspace}
  \newcommand{\changeRule}[3]{\textbf{on change}\ #1\ \textbf{update}\ #2\ \textbf{as}\ #3}
\newcommand{\prog}{\ensuremath{\calP}\xspace}
\newcommand{\commentbox}[1]{\noindent\framebox{\parbox{0.98\linewidth}{#1}}}
\newcommand{\acomment}[2]{\ \\ \fbox{\parbox{0.98\linewidth}{{\sc #1}: #2}}}
\newcommand{\mcomment}[2]{{\color{blue}(#1)}\footnote{#1: #2}} %
\newcommand{\commentbox}[1]{}
\newcommand{\mcomment}[2]{}
\newcommand{\acomment}[2]{}
\newcommand{\weight}{\mtext{w}}
\newcommand{\WG}{\mtext{wg}}
\newcommand{\msfWG}{\mtext{msf}}
\newcommand{\maxarity}{\ensuremath{{a_\mtext{max}}}\xspace}
\begin{document}
\title{The Dynamic Complexity of Acyclic Hypergraph Homomorphisms}

\author{Nils Vortmeier\inst{1} \and Ioannis 
Kokkinis\inst{2}}
\authorrunning{N. Vortmeier, I. Kokkinis}
\institute{University of Zurich, Switzerland\\\email{nils.vortmeier@uzh.ch}
\and
National Technical University of Athens, Greece, and\\
University of the Aegean, Greece\\
\email{ikokkinis@aegean.gr}}

\maketitle              %
\begin{abstract}
Finding a homomorphism from some hypergraph $\querygraph$ (or some relational structure) to another hypergraph $\datagraph$ is a fundamental problem in computer science. We show that an answer to this problem can be maintained under single-edge changes of $\querygraph$, as long as it stays acyclic, in the \DynFO framework of Patnaik and Immerman that uses updates expressed in first-order logic.
If additionally also changes of $\datagraph$ are allowed, we show that it is unlikely that existence of homomorphisms can be maintained in $\DynFO$. 
\keywords{Dynamic Complexity \and Conjunctive Queries \and Hypergraph Homomorphisms.}
\end{abstract}
     
    \section{Introduction}\label{section:introduction}
Many important computational problems can be phrased as the question ``is there a homomorphism from $\querygraph$ to $\datagraph
$?'', where $\querygraph$ and $\datagraph$ are hypergraphs, or more generally, relational structures.
Examples include evaluation and minimisation of conjunctive queries \cite{ChandraM77} and solving constraint satisfaction problems, see \cite{FederV98}. 

The problem \myproblem{Hom} -- is there a homomorphism from $\querygraph$ to $\datagraph
$? -- is \NP-complete in its general form. In the static setting it is well understood which restrictions on $\querygraph$ or $\datagraph
$ render the problem tractable \cite{DalmauKV02,Grohe07,HellN90}. 
A particular restriction of great importance in databases is to demand that $\querygraph$ is \emph{acyclic} \cite{beeri1983desirability}.
This restriction of \myproblem{Hom}, we call it the Acyclic Hypergraph Homomorphism problem \ahhNA, can be solved in polynomial time by Yannakakis' algorithm \cite{yannakakis1981algorithms} and is complete for the complexity class \LOGCFL \cite{gottlob2001complexity}, the class of problems that can be reduced in logarithmic space to a context-free language.

We are interested in a \emph{dynamic} setting where the input of a problem is subject to changes. The complexity-theoretic framework
\DynFO for such a dynamic setting was introduced by Patnaik and Immerman \cite{patnaikI97} and it is closely related to a setting of Dong, Su and Topor \cite{DongST95}.  
In this setting, a relational input structure is subject to a sequence of changes, which are usually insertions of single tuples into a relation, or deletions of single tuples from a relation. After each change, additionally stored auxiliary relations are updated as specified by first-order \emph{update formulas}.
The class \DynFO contains all problems for which the update formulas can maintain the answer for the changing input.

With few exceptions, for example in parts of \cite{muoz_et_al}, research in the \DynFO framework takes a \emph{data complexity} viewpoint: all context-free languages \cite{GeladeMS12} and all problems definable in monadic second-order logic \MSO \cite{DattaMSVZ19} are in \DynFO if the context-free language or the \MSO-definable problem is fixed and not part of the input. Every fixed conjunctive query is trivially in \DynFO, as such a query can be expressed in first-order logic and updates defined by first-order formulas can just compute the result from scratch after every change; however, there are also non-trivial maintenance results for fixed conjunctive queries for subclasses of \DynFO \cite{GeladeMS12,Zeume17}. 
The complexity results for \myproblem{Hom} and \ahhNA of~\cite{gottlob2001complexity,yannakakis1981algorithms} are however from a \emph{combined complexity} perspective: both $\querygraph$ and $\datagraph$ are part of the input. 

\paragraph*{Contributions.} In this paper we study the combined complexity of \ahhNA in the dynamic setting. As inputs we allow hypergraphs and general relational structures over some fixed schema $\schema$.

As our main positive result, we show that \ahh{\schema} is in \DynFO for every schema $\schema$, if $\querygraph$ is subject to insertions and deletions of hyperedges but stays acyclic, and $\datagraph$ may initially be arbitrary but is not changed afterwards.
A main building block for this result is a proof that a \emph{join tree} for $\querygraph$ can be maintained in \DynFO in such a way that after a single change to $\querygraph$ the maintained join tree only changes by a constant number of edges. 
We show that given a join tree for $\querygraph$ we can maintain the answer to \ahh{\schema} under changes of single edges of the join tree. The main result follows by compositionality properties of \DynFO.

We also give a hardness result for the case that also $\datagraph$ is subject to changes. If \ahh{\schema} is in \DynFO for every schema $\schema$ under changes of $\querygraph$ and $\datagraph$, then all \LOGCFL-problems are in (a variant of) \DynFO, which we believe not to be the case. So, this result is a strong indicator that maintenance under changes of $\datagraph$ is not possible in $\DynFO$.
Note that this result does not follow immediately from the fact that \ahhNA is \LOGCFL-complete: the \NL-complete problem of reachability in directed graphs is in \DynFO \cite{DattaKMSZ18} as well as a $\PTIME$-complete problem \cite{patnaikI97}, and these results do not imply that all \NL-problems and even all \PTIME-problems are in \DynFO, as this class is not known to be closed under the usual classes of reductions.

\paragraph*{Further related work.} %
In databases, Incremental View Maintenance is concerned with updating the result of a database query after a change of the input, see \cite{10.5555/310709.310737} for an overview. Koch \cite{Koch10incr} shows that a set of queries that include conjunctive queries can be maintained incrementally by low-complexity updates.
A system for maintaining the result of Datalog-like queries under changes of the data and the queries is described in \cite{GreenOW15live}.

\paragraph*{Organisation.}
We introduce preliminaries and the \DynFO framework in Section~\ref{section:preliminaries}. Section~\ref{section:upperbound} contains the maintenance result for \ahhNA  under changes of $\querygraph$, the hardness result for changes of $\datagraph$ is presented in Section~\ref{section:lowerbound}.
We conclude in Section~\ref{section:conclusion}.
This paper accompanies \cite{conferenceversion} and contains more proof details.

    \section{Preliminaries and Setting}\label{section:preliminaries}
We introduce some concepts and notation that we need throughout the paper. See also \cite{SchwentickVZ20} for an overview of Dynamic Complexity. We assume familiarity with first-order logic $\FO$, and refer to \cite{Libkin04} for basics of Finite Model Theory.

A \emph{(purely relational) schema} $\schema$ consists of a
finite set of relation symbols with a corresponding arity.
A \emph{structure} $\db$ over schema $\schema$ with finite domain $\domain$ has, for every $k$-ary relation symbol 
$R \in \schema$, a relation $R^\db \subseteq \domain^k$.
We assume that all structures come with a linear order $\leq$ on their domain $\domain$, which allows us to identify $\domain$ with $\{1, \ldots, n\}$, for $n = |\domain|$. We also assume that first-order formulas have access to this linear order and to compatible relations $+$ and $\times$ encoding addition and multiplication on $\{1, \ldots, n\}$.

\paragraph*{The dynamic complexity framework.}

In the dynamic complexity framework as introduced by Patnaik and Immerman \cite{patnaikI97},
the goal of a \emph{dynamic program} is to answer a standing query to an input structure $\inp$ under changes.
To do so, the program stores and updates an \emph{auxiliary structure} $\aux$, which is over the same domain as $\inp$. This structure consists of a set of \emph{auxiliary relations}.

The set of admissible changes to the input structure is specified by a set $\Delta$ of \emph{change operations}. We mostly consider the change operations $\ins_R$ and $\del_R$ for a relation $R$ of the input structure. A \emph{change} $\delta(\tpl a)$ consists of a change operation $\delta \in \Delta$ and a tuple $\tpl a$ over the domain of $\inp$. The change $\ins_R(\tpl a)$ inserts the tuple~$\tpl a$ into the relation $R$ and the change $\del_R(\tpl a)$ deletes $\tpl a$ from $R$.

For every change operation $\delta \in \Delta$ and every auxiliary relation $S$, a dynamic program has a first-order \emph{update rule} that specifies how $S$ is updated after a change over $\delta$. Such a rule is of the form \changeRule{$\delta(\tpl p)$}{$S(\tpl x)$}{$\varphi^S_\delta(\tpl p; \tpl x)$}, where the \emph{update formula} $\varphi^S_\delta$ is a first-order formula over the combined schema of $\inp$ and $\aux$.
After a change $\delta(\tpl a)$ is applied, the relation $S$ is updated to $\{\tpl b \mid (\inp, \aux) \models \varphi^S_\delta(\tpl a; \tpl b)\}$.

We say that a dynamic program $\prog$ \emph{maintains} a query $\query$ under changes from~$\Delta$ if a dedicated auxiliary relation $\ans$ contains the answer to $\query$ for the current input structure after each sequence of changes over $\Delta$.
The class $\DynFO$ contains all queries that can be maintained by dynamic programs with first-order update rules, starting from initially empty input and auxiliary relations. 
We also say that $\query$ \emph{can be maintained} in \DynFO under $\Delta$ changes.

In this paper we are interested in scenarios where only parts of the input are subject to changes. To have a meaningful setting we then have to allow non-empty initial input relations. We then say that a query can be maintained in $\DynFO$ \emph{starting from} non-empty inputs.
Sometimes we then also allow the auxiliary relations to be initialised within some complexity bound. We say that a query $\query$ is in \DynFO \emph{with $\calC$ initialisation}, for a complexity class $\calC$, if there is a $\calC$-algorithm $A$ such that $\query$ can be maintained in \DynFO if for an initial input~$\inp_0$ the initial auxiliary relations are set to the result of $A$ applied to $\inp_0$. 

The reductions usually used in dynamic complexity are bounded first-order reductions \cite{patnaikI97}. A reduction $f$ is \emph{bounded} if there is a global constant $c$ such that if the structure $\db'$ is obtained from the structure $\db$ by inserting or deleting one tuple, then $f(\db')$ can be obtained from $f(\db)$ by inserting and/or deleting at most $c$ tuples.
We will not directly employ these reductions here, but we will use the simple proof idea to show that \DynFO is closed under these reductions (see \cite{patnaikI97}): if a query $\query$ can be maintained by a dynamic program $\prog$ under insertions and deletions of single tuples, then there is also a dynamic program that can maintain $\query$ under insertions and deletions of up to $c$ tuples, for any constant~$c$. 
That dynamic program can be obtained by nesting $c$ copies of the update formulas of $\prog$.

\paragraph*{Hypergraphs and Homomorphisms.}

We use the term \emph{hypergraph} in a very broad sense. For this paper, a hypergraph $\calH$ is just a relational structure over a purely relational schema $\schema = \{E_1, \ldots, E_m\}$, that is, a structure $\calH = (\calV, E_1, \ldots, E_m)$, where the domain $\calV$ is a set of nodes and the relations $E_1, \ldots, E_m$ are sets of (labelled) hyperedges. This definition implies that the maximal size of any hyperedge, that is, the maximal arity of a relation $E_i$, is a constant that only depends on $\schema$.
Sometimes we ignore the labels and denote $\calH$ as a tuple $(\calV, \calE)$, where $\calE = E_1 \cup \cdots \cup E_m$ is the set of all hyperedges.

A \emph{spanning forest} of an undirected graph $G=(V,E)$ is defined in the usual way. We encode a spanning forest as a structure $(V, F, P)$ where $F$ is the set of spanning edges and $P$ is a ternary relation that describes paths in the spanning forest. A tuple $(s,t,u) \in P$ indicates that (1) $s$ and $t$ are in the same connected component of the spanning forest and (2) the unique path from $s$ to $t$ in the spanning forest is via the node $u$.
Patnaik and Immerman \cite{patnaikI97} have shown that spanning forests with this encoding can be maintained in \DynFO under insertions and deletions of single edges \cite[Theorem 4.1]{patnaikI97}.

A \emph{join forest} $J(\calH)$ of a hypergraph $\calH = (\calV, E_1, \ldots, E_m)$ is a forest whose nodes are the hyperedges of $\calH$, such that if two hyperedges $e, e'$ have a node $v \in \calV$ in common, then they are in the same connected component of $J(\calH)$ and all nodes on the unique path from $e$ to $e'$ in $J(\calH)$ are hyperedges of $\calH$ that also include $v$.
We encode a join forest using relations $F_{ij}$ and $P_{ijk}$ with the same intended meaning as for spanning forests, where $i,j,k \in \{1, \ldots, m\}$. The arity of $F_{ij}$ is the sum of the arities of $E_i$ and $E_j$, a tuple $(e,e') \in F_{ij}$ indicates that $J(\calH)$ has an edge between the hyperedges $e \in E_i$ and $e' \in E_j$. The use of $P_{ijk}$ is analogous.

We define that a hypergraph is \emph{acyclic} if it has a join forest. This definition coincides with the notion of \emph{$\alpha$-acyclicity} introduced by Fagin \cite{fagin1983degrees}. See also \cite[Section 2.2]{gottlob2001complexity} for a detailed discussion of this notion.

A \emph{homomorphism} from a hypergraph $\calH = (\calV, E_1^\calH, \ldots, E_m^\calH)$ to a hypergraph $\calH' = (\calV', E_1^{\calH'}, \ldots, E_m^{\calH'})$ is a map $h \colon \calV \to \calV'$ that preserves the hyperedge relations. So, for all relations $E_i$ and all tuples $(v_1, \ldots, v_\ell)$ over $\calV$, where $\ell$ is the arity of $E_i$, if $(v_1,\ldots, v_\ell) \in E_i^\calH$ is a hyperedge of $\calH$, then $(h(v_1),\ldots, h(v_\ell)) \in E_i^{\calH'}$ is a hyperedge of $\calH'$.

The main problem we study is the Acyclic Hypergraph Homomorphism problem \ahh{\schema}, where $\schema$ is a fixed schema. It asks, for two given hypergraphs 
$\querygraph$ and $\datagraph$ over schema $\schema$ (where
$\querygraph$ is acyclic), also called \emph{query hypergraph} and \emph{data hypergraph} respectively, whether there is a homomorphism from $\querygraph$ to $\datagraph$.

    \section{Maintenance under Changes of the Query Hypergraph}\label{section:upperbound}

The goal of this section is to show that \ahhNA can be maintained under changes of the query hypergraph $\querygraph$, as long as it stays acyclic. We also show that a \DynFO program can recognise that a change would make $\querygraph$ cyclic. So, we do not need to assume that only changes occur that preserve acyclicity, if we allow the program to ``deny'' all other changes.

We introduce some notation of \cite{gottlob2001complexity}.
The \emph{weighted hyperedge graph} $\WG(\calH)$ of a hypergraph $\calH$ is the undirected weighted graph $\WG(\calH) = (V,E,w)$ whose nodes $V$ are the hyperedges of $\calH$ and the set $E$ contains an undirected edge $(e,e')$ if $e, e'$ are different hyperedges of $\calH$ that have at least one node in common. The weight $w((e, e'))$ of such an edge is the number of nodes that $e$ and $e'$ have in common.

The \emph{weight} $\weight(\calH)$ of a hypergraph $\calH$ is the sum over the degrees of the non-isolated nodes of $\calH$, where each degree is decremented by one. So, if for $\calH = (\calV, E_1, \ldots, E_m)$ the set $\calV_{\mtext{ni}} \subseteq \calV$ contains all nodes of $\calH$ that appear in at least one hyperedge, then $\weight(\calH) = \sum_{v \in \calV_{\mtext{ni}}} (\deg(v) -1)$.

The following lemma provides the basis for our approach. It was originally proven in \cite{bernstein1981power}, we follow the presentation of \cite[Proposition 3.5]{gottlob2001complexity}.

\begin{lemma}[{\cite{bernstein1981power}, see also \cite{gottlob2001complexity}}]
\label{lem:acy_equiv}
Let $\calH$ be a hypergraph. 
\begin{enumerate}[label=(\alph*), ref=(\alph*)]
\item
The hypergraph $\calH$ is acyclic if and only if the weight $\weight(\calH)$ of $\calH$ is equal to the weight $\weight(\msfWG(\WG(\calH)))$ of a maximal-weight spanning forest of $\WG(\calH)$.
\item
If $\calH$ is acyclic, then $\msfWG(\WG(\calH))$ is a join forest of $\calH$.
\end{enumerate}
\end{lemma}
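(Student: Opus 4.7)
The plan is to prove both parts of the lemma via a single double-counting identity: for any forest $F$ on the set of hyperedges of $\calH$, the weight of $F$ (viewed as a subgraph of $\WG(\calH)$) is bounded above by $\weight(\calH)$, with equality characterising exactly the join forests. Once this characterisation is established, parts (a) and (b) both follow immediately.

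First I would rewrite $\weight(F) = \sum_{(e,e') \in F} |e \cap e'|$ by swapping the order of summation, obtaining $\weight(F) = \sum_{v \in \calV_{\mtext{ni}}} |\{(e,e') \in F : v \in e \cap e'\}|$. For each non-isolated node $v$, let $V_v$ be the set of hyperedges of $\calH$ that contain $v$, so that $|V_v| = \deg(v)$. The $F$-edges contributing to the $v$-term form a subforest of $F$ whose vertices all lie in $V_v$, and any forest on at most $|V_v|$ vertices has at most $|V_v| - 1$ edges. Summing yields $\weight(F) \leq \sum_v (\deg(v) - 1) = \weight(\calH)$. Equality in this bound holds if and only if, for every non-isolated $v$, the $F$-subforest on $V_v$ has exactly $\deg(v) - 1$ edges, i.e., the subgraph of $F$ induced by $V_v$ is a spanning tree of $V_v$, in particular connected.

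Second, I would observe that the per-$v$ connectedness just identified is precisely the defining property of a join forest: any two hyperedges sharing $v$ lie in the same component of $F$ and the unique path between them passes only through hyperedges containing $v$. Combined with the first step, this gives the crucial equivalence: a forest $F$ on the hyperedges of $\calH$ is a join forest if and only if $\weight(F) = \weight(\calH)$. Part (a) now follows: since acyclicity is defined as the existence of a join forest, if $\calH$ is acyclic then some join forest $J$ realises the maximum weight $\weight(\calH)$, so $\weight(\msfWG(\WG(\calH))) = \weight(\calH)$; conversely, if $\weight(\msfWG(\WG(\calH))) = \weight(\calH)$, then by the equivalence $\msfWG(\WG(\calH))$ is itself a join forest, witnessing acyclicity. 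Part (b) is then a direct restatement of this equivalence.

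The main obstacle is the clean matching of equality conditions in the double count, together with a couple of bookkeeping nuisances: the definition of a join forest technically allows tree edges between disjoint hyperedges (which contribute zero weight and can be removed without affecting the join-forest property, so that a join forest may be assumed to be a spanning subforest of $\WG(\calH)$), and isolated hyperedges must be handled explicitly since they contribute nothing to either side. Beyond these minor issues the proof rests only on the elementary forest identity $|E| = |V| - c$, so the entire argument is combinatorial once the summation has been reordered.
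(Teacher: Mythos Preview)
The paper does not prove this lemma; it is quoted from \cite{bernstein1981power} (in the presentation of \cite[Proposition~3.5]{gottlob2001complexity}) and used as a black box. So there is no in-paper proof to compare against.

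That said, your argument is correct and is essentially the standard proof underlying the cited references. The double-counting identity $\sum_{(e,e')\in F}|e\cap e'|=\sum_{v}\bigl|\{(e,e')\in F: v\in e\cap e'\}\bigr|$ together with the forest bound $|V_v|-1$ per node $v$ gives exactly $\weight(F)\le\weight(\calH)$, and the equality analysis (the $F$-edges within $V_v$ form a spanning tree of $V_v$) is precisely the running-intersection property defining a join forest. Your handling of the two edge cases is also right: join-forest edges between disjoint hyperedges can be dropped without harm, and a join forest with weight $\weight(\calH)$, being a subforest of $\WG(\calH)$ with nonnegative edge weights, forces the maximum-weight spanning forest to attain that same value. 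Nothing is missing.
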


Using this lemma, we prove that a dynamic program can maintain acyclicity of hypergraphs, as well as a join forest that only changes moderately when the input hypergraph is changed.

\begin{theorem}\label{theorem:jf}
Let $\schema = \{E_1, \ldots, E_m\}$ be a fixed schema. The following can be maintained in \DynFO under insertions and deletions of single hyperedges:
\begin{enumerate}[label=(\alph*),ref=(\alph*)]
 \item whether a hypergraph over $\schema$ is acyclic, and
 \item a join forest for an acyclic hypergraph $\calH$ over $\schema$, as long as $\calH$ stays acyclic. Moreover, there is a global constant $c_\schema$ such that if $J(\calH)$ is the maintained join forest for $\calH$ and $J(\calH)'$ is the maintained join forest after a single hyperedge is inserted or deleted, then $J(\calH)$ and $J(\calH)'$ differ by at most $c_\schema$ edges.
\end{enumerate}
\end{theorem}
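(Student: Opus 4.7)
The plan is to reduce both parts to maintaining a max-weight spanning forest $\msfWG(\WG(\calH))$ of the weighted hyperedge graph $\WG(\calH)$. By Lemma~\ref{lem:acy_equiv}(b) such a forest is already a join forest whenever $\calH$ is acyclic, so part~(b) handles the structural object. Part~(a) then follows from Lemma~\ref{lem:acy_equiv}(a): both $\weight(\calH)$ and the total weight of the maintained forest are expressible in first-order logic with the available arithmetic (they are bounded sums over nodes, respectively forest edges), and acyclicity amounts to equality of these two values.

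For the maintenance itself I would build on Patnaik and Immerman's \DynFO{}-maintenance of spanning forests under single-edge changes, extending it to weighted graphs and lifting a single-hyperedge change in $\calH$ to the induced structured ``bulk'' change in $\WG(\calH)$: inserting (resp.\ deleting) a hyperedge $e$ corresponds to inserting (resp.\ deleting) the single WG-node $e$ together with all its incident edges. On an insertion, the update rule would use the maintained path relation of the forest to identify the at most $\maxarity$ previously-disjoint WG-components that $e$ bridges and to attach $e$ to each of them via a highest-weight incident edge; a deletion is dual, identifying the subtrees that fragment and reconnecting them with the best surviving WG-edges.

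The real work lies in the bounded-change claim, which also dictates that I should not maintain an arbitrary max-weight forest. A naive choice is unstable: heavy new edges incident to the inserted vertex can trigger cascading Kruskal-style swaps, and deleting a high-degree hyperedge can shatter a large portion of the forest. I would avoid both by maintaining a \emph{canonical} max-weight join forest in which every hyperedge has degree bounded by a function of $\maxarity$. Such a canonical form exists for every acyclic hypergraph (intuitively: for each node $v$ of $\calH$, the hyperedges sharing $v$ can be arranged along a chain rather than a star, and this arrangement can be made coherent across different shared nodes using the join-forest property), and it can be pinned down by breaking weight ties according to the linear order on the domain. The main obstacle is then to show that each single hyperedge change modifies this canonical form by at most a constant $c_\schema$ edges, and that the required modifications are first-order definable from the path relation; once this is established, the FO definability of the actual update formulas is routine.
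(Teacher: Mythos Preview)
Your high-level plan---reduce to a max-weight spanning forest of $\WG(\calH)$ via Lemma~\ref{lem:acy_equiv}, maintain $\weight(\calH)$ and the forest weight by arithmetic, and recognise that bounded forest degree is the crux for both deletion and bounded change---is exactly the paper's approach. The divergence, and the gap in your proposal, is in how bounded degree is achieved.

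You propose to maintain a \emph{canonical} max-weight join forest, fixed by tie-breaking with the linear order, and then argue that this canonical object changes by only $c_\schema$ edges per update. You correctly flag this last step as ``the main obstacle'', but you offer no mechanism for it, and in general a globally canonical choice is \emph{not} stable: flipping one tie-breaker or inserting one heavy vertex can propagate through many canonical choices. Your ``chain rather than star'' intuition is right, but pinning the chain down by a global order is exactly what makes it brittle. Also, your insertion sketch (``attach $e$ to each previously-disjoint component via a highest-weight incident edge'') misses the case where several edges from $e$ land in the \emph{same} component and should trigger an internal swap; this is precisely the cascading you worry about and it must be handled, not just avoided by canonicity.

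The paper sidesteps canonicity entirely. It maintains \emph{some} max-weight spanning forest subject to a purely local invariant: for every WG-node $e$ and every non-empty subset $A$ of the hypergraph nodes of $e$, at most two forest edges join $e$ to neighbours whose intersection with $e$ is exactly $A$. This caps every degree at $2(2^{\maxarity}-1)$ and formalises your chain intuition without any global tie-breaking. An insertion of $e$ is processed in $\le 2^{\maxarity}-1$ stages, one per subset $A$ in decreasing size; each stage adds at most one edge (to an $A$-neighbour of $A$-degree $\le 1$, which always exists in a forest) and removes at most one (the lightest edge on the relevant forest path), so max-weight and the invariant are preserved and the total change is bounded. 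Deletion is then straightforward because the deleted node has bounded degree. The key conceptual point you are missing is that an invariant-based, non-canonical forest gives you bounded change \emph{by construction}, whereas a canonical form would require a separate stability argument you do not have.
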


The proof follows the idea that is brought forth by Lemma \ref{lem:acy_equiv}: we show that a maximal-weight spanning forest of $\WG(\calH)$ and its weight can be maintained. This weight is compared with the weight of $\calH$, which is easy to maintain. If the weights are equal, then $\calH$ is acyclic and the spanning forest is a join forest.

Already Patnaik and Immerman \cite{patnaikI97} describe how a spanning forest of an undirected graph can be maintained under changes of single edges, and their procedure \cite[Theorem 4.1]{patnaikI97} can easily be extended towards maximal-weight spanning forests. 
However, we face the problem that inserting and deleting hyperedges of $\calH$ implies insertions and deletions of nodes of $\WG(\calH)$. While a spanning forest can easily be maintained in \DynFO under node insertions, it is an open problem to maintain a spanning forest under node deletions: if the spanning forest is a star and its center node is deleted, then it seems that a spanning forest of the remaining graph needs to be defined from scratch, which is not possible using \FO formulas.
We circumvent this problem and show that we can maintain a spanning forest where the degree of every node is bounded by a constant.

\begin{proof}
We show how a maximal-weight spanning forest of $\WG(\calH)$ and the weight $\weight(\calH)$ can be maintained; the result then follows using Lemma~\ref{lem:acy_equiv}.

We start with the weight $\weight(\calH)$. If a hyperedge $e$ is inserted, then the weight of the hypergraph increases by the number of nodes it contains that were not isolated before the insertion. Similarly, if $e$ is deleted, then the weight decreases by the number of nodes it contains that do not become isolated. This update can easily be expressed by first-order formulas.

Now we consider maintaining a maximal-weight spanning forest of $\WG(\calH)$.

Let \maxarity be the maximal arity of a relation in $\schema$. Any hyperedge of $\calH = (\calV, E_1, \ldots, E_m)$ can only include at most \maxarity many nodes and there are at most $r \df 2^\maxarity-1$ many different non-empty sets of nodes that a fixed hyperedge can have in common with any other hyperedge.
We show how to maintain a maximal-weight spanning forest of 
$\WG(\calH)$ where each node has degree at most $2r$. More specifically, for any node $e$ of $\WG(\calH)$ (which is a hyperedge of $\calH$) and each non-empty set $A$ of nodes appearing in $e$, the maintained spanning forest contains at most two edges $(e,e_1), (e,e_2)$ such that the set of nodes that $e$ has in common with $e_1$ and $e_2$, respectively, is exactly $A$. We call this property \emph{Invariant $(\star)$}.

We assume that our auxiliary relations contain a maximal-weight spanning forest $S(\calH)$ and its weight, and that $S(\calH)$ satisfies Invariant $(\star)$. 
This is trivially satisfied by an empty spanning forest for an initially empty hypergraph. 
We show how the invariant can be satisfied again after a change. 

In the following, we say that $e'$ is an \emph{$A$-neighbour} of $e$ if these hyperedges have exactly the nodes $A$ in common. The number of $A$-neighbours of $e$ (that $e$ has an edge to in $S(\calH)$) is called its \emph{$A$-degree} (with respect to $S(\calH)$).

\paragraph*{Insertion of a hyperedge $e$.}
Suppose a hyperedge $e$ is inserted into the hypergraph $\calH$, resulting in the hypergraph $\calH'$. Let $B$ be the set of nodes that occur in $e$. For each non-empty $A \subseteq B$, let $E_A$ be the edges of $e$ in $\WG(\calH')$ to its $A$-neighbours.
We adapt $S(\calH)$ in stages, one stage per subset $A$, and in each stage the spanning forest is changed by at most two edges. As the number of stages is bounded by the constant $r$, the maintained spanning forests before and after the update only differ by a constant number of edges.

Let $A_1, \ldots, A_\ell$ be a sequence of all non-empty 
subsets of $B$, partially ordered by their size, starting with 
the largest. So, $A_1 = B$.
Stage $i$ for an arbitrary $1 \leq i \leq \ell$ works as
follows. 
Suppose that $S_{i-1}$ is a maximal spanning forest of the graph that results from
$\WG(\calH)$ by adding the node $e$ and the edge set $\bigcup_{j \leq i-1} E_{A_j}$ and that satisfies 
Invariant $(\star)$ (so $S_0$ is a maximum spanning
forest for $\WG(\calH) \cup (\{e\}, \emptyset)$ and therefore also for $\WG(\calH)$).
Let $N$ be the $A_i$-neighbours of $e$. The hyperedges in $N$ 
form a clique in $\WG(\calH)$, as they all have at least the 
nodes $A_i$ in common, so they are in the same connected 
component $C$ of $S_{i-1}$. We consider two cases.

First, if $e$ is not in $C$, then let $e' \in N$ be some 
hyperedge that has $A_i$-degree at most $1$ with respect to 
$S_{i-1}$. Such an $e'$ needs to exist as $S_{i-1}$ is a 
forest. Then $S_i \df S_{i-1} \cup \{(e,e')\}$ clearly is a 
spanning forest of $\WG(\calH) \cup \bigcup_{j \leq i}
E_{A_j}$. It is maximal, as replacing some spanning edge 
$(e_1,e_2)$ by another edge from $e$ to an $A_i$-neighbour
cannot increase the weight: if this would be the case for some 
edge $(e,e'')$, then $S_{i-1}$ cannot be maximal, because the 
edge $(e',e'')$ has at least the same weight as the edge 
$(e,e'')$ and replacing $(e_1,e_2)$ by $(e',e'')$ would 
therefore create a spanning forest with larger weight than 
$S_{i-1}$. Invariant $(\star)$ is also satisfied by $S_i$.

Second, if $e$ is in $C$, then let $(e_1,e_2)$ be the minimal-weight edge in $S_{i-1}$ on the path from $e$ to any hyperedge in $N$. If the weight of this edge is at least $|A_i|$, then $S_i \df S_{i-1}$, as the weight of the spanning forest cannot be increased by incorporating an edge from $E_{A_i}$. Otherwise, $S_i$ results from $S_{i-1}$ by removing the edge $(e_1,e_2)$ and adding an edge from $e$ to one of its $A_i$-neighbours with $A_i$-degree at most $1$ with respect to $S_{i-1}$. With the same arguments as in the other case, $S_i$ is a maximal-weight spanning forest of $\WG(\calH) \cup \bigcup_{j \leq i} E_{A_j}$, it also satisfies Invariant $(\star)$.

The updates of the spanning forest relation $F_{ij}$ can be expressed by first-order formulas using the relations $P_{ijk}$. These relations can be updated as in the proof of \cite[Theorem 4.1]{patnaikI97}, as the relations $F_{ij}$ change only by a constant number of tuples. The weight of the spanning forest can also be updated easily.

\paragraph*{Deletion of a hyperedge $e$.}
Suppose $e$ is deleted from $\calH$, resulting in the hypergraph $\calH'$. As the maintained spanning forest $S(\calH)$ satisfies Invariant $(\star)$, the degree of $e$ in $S(\calH)$ is bounded by the constant $r$.
Therefore, the procedure of \cite[Theorem 4.1]{patnaikI97} only needs to be applied for a constant number of edge deletions.
If by a deletion of a spanning tree edge a connected component of $S(\calH)$ decomposes into two components $C_1$ and $C_2$, then we need to ensure that a potentially selected replacement edge results in a spanning forest that satisfies Invariant $(\star)$ again.
So, suppose that the components $C_1$ and $C_2$ are connected in $\WG(\calH')$, and let $(e_1,e_2)$ be a maximal-weight edge that connects them. Let $A$ be the set of nodes that $e_1$ and $e_2$ have in common. Without loss of generality, we suppose that $e_1$ and $e_2$ have $A$-degree at most $1$ with respect to $S(\calH)$. 
If this is not the case for example for $e_1$, then there needs to be a $A$-neighbour $e'_1$ in $C_1$ that has $A$-degree at most $1$ with respect to $S(\calH)$, which then can be used instead of $e_1$.
Adding $(e_1, e_2)$ to the remaining spanning forest will therefore result in a maximal-weight spanning forest for the changed hypergraph which satisfies Invariant $(\star)$. Also, the maintained spanning forest differs only by at most $2r$ edges from its previous version.
The weight of the spanning forest can easily be updated as well.
\end{proof}

We now present the main maintenance result of this paper.

\begin{theorem}\label{theorem:upperbound}
Let $\schema = \{E_1, \ldots, E_m\}$ be a fixed schema. The problem \ahh{\schema} can be maintained in \DynFO, starting from an arbitrary initial hypergraph $\datagraph$ and an initially empty hypergraph $\querygraph$, under insertions and deletions of single hyperedges of $\querygraph$, as long as this hypergraph stays acyclic.
\end{theorem}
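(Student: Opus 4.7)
The proof follows the plan outlined in the introduction. First, apply Theorem~\ref{theorem:jf} to maintain a join forest $J$ of $\querygraph$ together with the acyclicity bit, with the guarantee that each hyperedge update of $\querygraph$ causes at most $c_\schema$ edge changes to $J$ (plus the addition or removal of the hyperedge itself as a node of $J$). By the standard \DynFO nesting technique discussed in Section~\ref{section:preliminaries} (the argument showing closure under bounded FO reductions), it therefore suffices to maintain \ahh{\schema} under single atomic changes to $J$, treating $J$ as part of the input.

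For that sub-problem, the plan is to maintain, for every ordered pair $(e,e')$ of $J$-adjacent hyperedges, a relation $\text{Good}_{e\to e'}\subseteq E_{i_{e'}}^\datagraph$ containing those tuples $\tpl t$ that are the image of $e'$ in some homomorphism from the sub-hypergraph induced by the $e'$-side of $J\setminus\{(e,e')\}$ into $\datagraph$. This is exactly the intermediate information produced by the bottom-up semi-join reduction on the join tree, and once it is maintained the existence of a homomorphism is first-order definable: in each connected component of $J$, pick any hyperedge $e$ and check whether there is a tuple $\tpl t\in E_{i_e}^\datagraph$ that is compatible on $e\cap e'$ with some tuple of $\text{Good}_{e\to e'}$ for every neighbour $e'$ of $e$ in $J$.

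The main obstacle is updating the $\text{Good}$ relations under a single $J$-edge change, because such a change can in principle affect $\text{Good}_{e\to e'}$ for every directed edge along the $J$-path separating the changed edge from $e'$. The plan is to use the ternary path-encoding relations $P$ that Theorem~\ref{theorem:jf} also maintains to locate precisely the affected directed edges, and then to reduce the update of $\text{Good}$ along that path to the propagation of compatibility information, which can be maintained in \DynFO by invoking the reachability maintenance result of~\cite{DattaKMSZ18} on an auxiliary "compatibility" graph over (hyperedge, tuple) pairs. The key point that makes this propagation tractable in \FO is that the subtrees hanging off of the affected $J$-path are untouched by a single edge change, so their contributions remain summarised by unchanged $\text{Good}$ relations; only a one-dimensional chain of compatibility along the path itself has to be refreshed.

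Initialisation is essentially trivial: since $\querygraph$ is initially empty, $J$ is empty, no $\text{Good}$ relations need to be stored, and the homomorphism from the empty hypergraph to $\datagraph$ vacuously exists. The arbitrary but static data hypergraph $\datagraph$ enters only through the first-order compatibility tests inside the update formulas, and no dynamic maintenance on $\datagraph$ is required. Composing the join-forest maintenance of Theorem~\ref{theorem:jf} with the single-edge-change maintenance of $\text{Good}$ via the nesting technique then yields the claimed \DynFO program for \ahh{\schema} under hyperedge changes of $\querygraph$.
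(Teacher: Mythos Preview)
Your high-level plan---use Theorem~\ref{theorem:jf} to reduce to single edge changes of the join forest $J$, then maintain Yannakakis-style semi-join information---matches the paper. The gap is in the second half: the relations $\text{Good}_{e\to e'}$ you propose, indexed by \emph{adjacent} $J$-pairs, are too weak to be updated in \FO after a single $J$-edge change, and the appeal to the reachability result of~\cite{DattaKMSZ18} does not close the gap.

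Concretely, when a $J$-edge $(e_1,e_2)$ is inserted or deleted, every $\text{Good}_{e\to e'}$ whose directed edge points toward the change must be refreshed, and the new value at one step of the path depends on the new value at the next step. As you yourself note, this is a transitive propagation along a path of unbounded length, which is not \FO-expressible from the old $\text{Good}$ relations alone. Your proposed remedy, maintaining reachability in a compatibility graph over (hyperedge, tuple) pairs, fails for two reasons. First, a single $J$-edge change between $e_1$ and $e_2$ adds or removes \emph{all} compatibility edges between pairs $(e_1,\tpl t_1)$ and $(e_2,\tpl t_2)$, which is polynomially many---far beyond the single-edge (or even polylog) changes handled by the cited result. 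Second, plain reachability in that graph does not compute $\text{Good}$: the semantics of $\text{Good}_{e\to e'}$ involves a conjunction over all $J$-neighbours of $e'$ (other than $e$), not just the existence of some compatible path, so even exact reachability information in the compatibility graph would not recover it.

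The paper avoids this by maintaining a strictly richer relation $H_{ijk}(\tpl r,\tpl x_1,\tpl x_2,\tpl y_1,\tpl y_2)$: with root $E_i(\tpl r)$, it records that $\tpl y_1$ is a valid assignment for $E_j(\tpl x_1)$ \emph{conditional on} the subtree below $E_k(\tpl x_2)$ being satisfiable with assignment $\tpl y_2$. This ``two-pointed'' or context relation (the technique is borrowed from~\cite{GeladeMS12}) is exactly what lets you jump over the long path in one \FO step: to recompute the valid assignments at $E_j(\tpl x_1)$ after a change at $(e_1,e_2)$, you compose the stored $H$-entry from $\tpl x_1$ down to $e_1$ with the freshly computed valid assignments at $e_1$. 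Your $\text{Good}$ relations are the one-pointed projections $H'_{ij}$ of these; they suffice to \emph{read off} the answer but not to \emph{maintain} themselves.
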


The proof uses the idea of Yannakakis' algorithm \cite{yannakakis1981algorithms} for evaluating a conjunctive query. This algorithm processes a join tree for a query $\querygraph$ in a bottom-up fashion. In a first step, for each node $E_i(\tpl x)$ of the join tree (which is a hyperedge of $\querygraph$) all assignments $\tpl y$ for its variables are stored such that $E_i(\tpl y)$ exists in the data hypergraph $\datagraph$.
Then, bottom-up, each inner node removes all of its variable assignments that are not consistent with the assignments of its children. So, an assignment $\tpl y$ for a node $E_i(\tpl x)$ is removed if there is a child $E_j(\tpl x')$ of $E_i(\tpl x)$ such that no stored assignment $\tpl y'$ of that child agrees with $\tpl y$ on the common variables of $\tpl x$ and $\tpl x'$. All remaining stored assignments for $E_i(\tpl x)$ can be extended to a homomorphism for the subhypergraph of $\querygraph$ that consists of the hyperedges that are in the subtree of the join tree rooted at $E_i(\tpl x)$.
A homomorphism from $\querygraph$ to $\datagraph$ exists if after the join tree is processed the root has remaining assignments.

\begin{proof}
Let $\querygraph$ be an acyclic hypergraph over some schema $\schema$ and let $\datagraph$ be a hypergraph over the same schema. Also, let $J(\querygraph)$ be a join forest of $\querygraph$.

We adapt a technique that was used by Gelade, Marquardt and Schwentick \cite{GeladeMS12} to show that regular tree languages can be maintained in a subclass of \DynFO. 
For each triple $E_i, E_j, E_k$ of symbols from $\schema$ we maintain an auxiliary relation $H_{ijk}(\tpl r, \tpl x_1, \tpl x_2, \tpl y_1, \tpl y_2)$ with the following intended meaning.
A tuple $(\tpl r, \tpl x_1, \tpl x_2, \tpl y_1, \tpl y_2)$ is in $H_{ijk}$ if 
\begin{enumerate}[label=(\arabic*),ref=(\arabic*)]
 \item the hyperedges $E_i(\tpl r)$, $E_j(\tpl x_1)$ and $E_k(\tpl x_2)$ are present in $\querygraph$ and in the same connected component $C$ of $J(\querygraph)$, 
 \item when we consider $E_i(\tpl r)$ to be the root of $C$ then $E_j(\tpl x_1)$ is a descendant of $E_i(\tpl r)$ and $E_k(\tpl x_2)$ is a descendant of $E_j(\tpl x_1)$, and
 \item if we assume that there is a homomorphism $h_2$ of the subtree of $C$ rooted at $E_k(\tpl x_2)$ into $\datagraph$ such that $h_2(\tpl x_2) = \tpl y_2$, then it follows that there also is a homomorphism $h_1$ of the subtree of $C$ rooted at $E_j(\tpl x_1)$ into $\datagraph$ such that $h_1(\tpl x_1) = \tpl y_1$.
\end{enumerate}

Phrased differently, $(\tpl r, \tpl x_1, \tpl x_2, \tpl y_1, \tpl y_2) \in H_{ijk}$ means that the hyperedges in $J(\querygraph)$ which, considering $E_i(\tpl r)$ to be the root, are in the subtree of $E_j(\tpl x_1)$ but not in the subtree of $E_k(\tpl x_2)$, can be mapped into $\datagraph$ by a homomorphism that maps the elements $\tpl x_1$ to $\tpl y_1$ and the elements $\tpl x_2$ to $\tpl y_2$.

If $(\tpl r, \tpl x_1, \tpl x_2, \tpl y_1, \tpl y_2) \in H_{ijk}$ holds we say that $\tpl y_1$ is a \emph{valid partial assignment} for $E_j(\tpl x_1)$ \emph{down to} $(E_k(\tpl x_2), \tpl y_2)$.

Notice that from these relations one can first-order define relations $H'_{ij}(\tpl r, \tpl x, \tpl y)$ with the intended meaning that $(\tpl r, \tpl x, \tpl y) \in H'_{ij}$ if
\begin{enumerate}[label=(\arabic*),ref=(\arabic*)]
\item the hyperedges $E_i(\tpl r)$ and $E_j(\tpl x)$ are in the same connected component $C$ of $J(\querygraph)$, and
\item when we consider $E_i(\tpl r)$ to be the root of $C$ then there is a homomorphism $h$ of the subtree of $C$ rooted at $E_j(\tpl x)$ into $\datagraph$ such that  $h(\tpl x) = \tpl y$. 
\end{enumerate}
For this, a first-order formula existentially quantifies a hyperedge $E_k(\tpl x_2)$ and a tuple $\tpl y_2$ of elements, and checks that $E_k(\tpl x_2)$ is a leaf of the component $C$ with root $E_i(\tpl r)$, that the hyperedge $E_k(\tpl y_2)$ exists in $\datagraph$ and that $(\tpl r, \tpl x, \tpl x_2, \tpl y, \tpl y_2) \in H_{ijk}$ holds.
Whether a node is a leaf in a join tree can be expressed using the join tree's paths relations $P_{ijk}$, all other conditions are clearly first-order expressible.
We assume in the following that these relations are available. If $(\tpl r, \tpl x, \tpl y) \in H'_{ij}$ holds we say that $\tpl y$ is a \emph{valid partial assignment} for $E_j(\tpl x)$.

We argue next that if we can maintain these auxiliary relations under insertions and deletions of single edges of the join forest, then the statement of the theorem follows.

Notice that from the auxiliary relations a first-order formula can express whether a homomorphism from $\querygraph$ to $\datagraph$ exists. To check this, a formula needs to express that for every connected component of $J(\querygraph)$ there is a homomorphism from this component to $\datagraph$. 
This is the case if for each hyperedge $E_i(\tpl r)$ of $\querygraph$ there is a tuple $\tpl y$ such that $(\tpl r,\tpl r,\tpl y)$ is in $H'_{ii}$.

It remains to argue that it is sufficient to maintain the auxiliary relations under changes of single edges of the join forest.
From Theorem \ref{theorem:jf} we know that a join forest $J(\querygraph)$ for $\querygraph$ can be maintained in \DynFO under insertions and deletions of single hyperedges, as long as it stays acyclic. Moreover, after each edge change, the maintained join forest only differs in a constant number of edges from its previous version.
If we have a dynamic program that is able to process single edge changes of the join forest, then by nesting its update formulas $c$ times we can obtain a dynamic program $\prog'$ that is able to process $c$ edge changes at once.
In summary, a dynamic program $\prog$ for \ahhNA maintains a join forest as described by Theorem \ref{theorem:jf} and after every change of a hyperedge it uses $\prog'$ to update the auxiliary relations and to decide whether a homomorphism exists.

Now we explain how the relations $H_{ijk}$ can be maintained by first-order formulas under insertions and deletions of single edges of the join forest. For notational simplicity we assume that the schema $\schema$ of $\querygraph$ consists of a single relation $E$. It follows that we only have one auxiliary relation $H$ that needs to be maintained.

\paragraph*{Edge insertions.}

When an edge $(e_1,e_2)$ is inserted into the join forest, the two connected components $C_1$ of $E(e_1)$ and $C_2$ of $E(e_2)$ get connected. The auxiliary relations for all other connected components remain unchanged.
We explain under which conditions a tuple $\tpl t = (\tpl r, \tpl x_1, \tpl x_2, \tpl y_1, \tpl y_2)$ is contained in the updated version of $H$, where we assume that $E(\tpl r)$ is from $C_1$. For hyperedges from $C_2$ the reasoning is symmetric. We assume that $E(\tpl x_1)$ is a descendant of $E(\tpl r)$ and $E(\tpl x_2)$ is a descendant of $E(\tpl x_1)$ in the combined connected component rooted at $E(\tpl r)$; if this is not the case, $\tpl t$ is not in the updated version of $H$.

We distinguish three cases. First, assume that $E(\tpl x_1)$ and $E(\tpl x_2)$ are in $C_1$. 
If $E(e_1)$ is not in the subtree of $E(\tpl x_1)$ or is in the subtree of $E(\tpl x_2)$, then no change regarding $\tpl t \in H$ is necessary. 
 
Otherwise, let $E(\tpl x_{\mtext{lca}})$ be the lowest common ancestor of $E(\tpl x_2)$ and $E(e_1)$ in the join tree with root $E(\tpl r)$ and let $E(\tpl x_c^1), \ldots, E(\tpl x_c^m)$ be the children of $E(\tpl x_{\mtext{lca}})$, where $E(\tpl x_c^1)$ is the ancestor of $E(\tpl x_2)$ and $E(\tpl x_c^m)$ is the ancestor of $E(e_1)$.
With the help of the old version of $H$ first-order formulas can determine the valid partial assignments for all children $E(\tpl x_c^i)$ for $i \geq 2$ and the valid partial assignments for $E(\tpl x_c^1)$ down to $(E(\tpl x_2), \tpl y_2)$. This is immediate for all $E(\tpl x_c^i)$ with $i < m$, we now explain it for $E(\tpl x_c^m)$.

To check whether a tuple $\tpl y_c^m$ is a valid partial assignment for $E(\tpl x_c^m)$, a first-order formula can first determine the valid partial assignments for $E(e_2)$ for the component $C_2$ with root $E(e_2)$, which are given by $H'$. With this information it can check which valid partial assignments for $E(e_1)$ for the component $C_1$ with root $E(\tpl r)$ are still valid for the union of $C_1$ and $C_2$ with root $E(\tpl r)$. To do so, it checks for a (formerly) valid assignment $E(e_1)$ whether there is a valid assignment for $E(e_2)$ such that they agree on the shared elements.
The tuple $\tpl y_c^m$ is a valid partial assignment for $E(\tpl x_c^m)$ if it is a valid partial assignment for $E(\tpl x_c^m)$ down to $(E(e_1),\tpl y_{e_1})$, for some valid partial assignment $\tpl y_{e_1}$ for $E(e_1)$.

With the information on the children, a first-order formula 
can determine the valid partial assignments for 
$E(\tpl x_{\mtext{lca}})$ down to $(E(\tpl x_2), \tpl y_2)$. 
This only involves a check whether for a candidate assignment 
$\tpl y_\mtext{lca}$ a corresponding hyperedge 
$E(\tpl y_\mtext{lca})$ exists in $\datagraph$ and 
whether every child $E(\tpl x_c^i)$ has a valid partial 
assignment that agrees with $\tpl y_\mtext{lca}$ on the 
elements that are shared by $\tpl x_c^i$ and 
$\tpl x_\mtext{lca}$.

The tuple $\tpl t$ is in the updated version of $H$ if and 
only if $(\tpl r, \tpl x_1, \tpl x_\mtext{lca}, \tpl y_1,  
\tpl y_\mtext{lca})$ is in the old version of $H$, for some 
valid partial assignment $\tpl y_\mtext{lca}$ of $E(\tpl 
x_{\mtext{lca}})$ down to $(E(\tpl x_2), \tpl y_2)$,

As a second case, assume that $E(\tpl x_1)$ is in $C_1$ and $E(\tpl x_2)$ is in $C_2$. This case is very similar to the case we just discussed and we do not spell out the details.

As a last case, assume that  $E(\tpl x_1)$ and $E(\tpl x_2)$ 
are both in $C_2$. This case is very simple, as $t \in H$ holds 
after the update precisely if $(\tpl e_2, \tpl x_1,  
\tpl x_2, \tpl y_1, \tpl y_2) \in H$ holds before the update.

In all cases, the stated conditions can be expressed by first-order formulas. This is because the schema $\schema$ is fixed and therefore the arity of $E$ is constant, it follows that formulas can quantify over hyperedges and assignments. Also, formulas can determine whether a node is in a subtree of another node and the lowest common ancestor of two nodes using the paths relation $P_{ijk}$ of the join forest.

\paragraph*{Edge deletions.}

When an edge $(e_1,e_2)$ is deleted from the join forest, one 
connected component is split into the two connected components 
$C_1$ of $E(e_1)$ and $C_2$ of $E(e_2)$. Again, the auxiliary 
relations for all other connected components remain unchanged.
As for the insertion case, we explain under which conditions a 
tuple $\tpl t = (\tpl r, \tpl x_1, \tpl x_2, \tpl y_1, 
\tpl y_2)$ is contained in the updated version of $H$, for 
a root $E(\tpl r)$ from $C_1$. We assume that $E(\tpl x_1)$ is 
a descendant of $E(\tpl r)$ and $E(\tpl x_2)$ is a descendant 
of $E(\tpl x_1)$ in the component $C_1$ rooted at $E(\tpl r)$; 
otherwise, $\tpl t$ is not in the updated version of $H$.

If $E(e_1)$ is not in the subtree of $E(\tpl x_1)$ or is in 
the subtree of $E(\tpl x_2)$, then no change regarding
$\tpl t \in H$ is necessary. 
Otherwise the update is performed very similarly to the 
corresponding insertion case detailed above. 
The only difference is the way the valid partial assignments 
for $E(e_1)$ are determined.
Notice that a first-order formula can determine the valid 
partial assignments for all (remaining) children of $E(e_1)$, 
as they are given by the relation $H'$. A tuple $\tpl y_{e_1}$ 
is a valid partial assignment for $E(e_1)$ if the hyperedge 
$E(\tpl y_{e_1})$ exists in $\datagraph$ and if all children 
have a valid partial assignment that agrees with 
$\tpl y_{e_1}$ on the shared elements.
\end{proof}

     \section{Hardness under Changes of the Data Hypergraph}\label{section:lowerbound}
We have seen in the previous section that one can maintain the existence of homomorphisms in \DynFO if only the query hypergraph $\querygraph$ may change and the data hypergraph $\datagraph$ remains the same. 
The dynamic program we constructed for the proof of Theorem \ref{theorem:upperbound} can not directly cope with changes of $\datagraph$. This is because $\querygraph$ might contain several hyperedges $E_i(\tpl x_1), \ldots, E_i(\tpl x_m)$ over a single relation~$E_i$: if a change of $\datagraph$ occurs, then the number of nodes in the join tree for which we have to take this change into account when updating partial valid assignments is a priori unbounded. 
If we disallow multiple hyperedges over the same relation in
 $\querygraph$, then we can actually allow a change to replace an arbitrary number of $\datagraph$-hyperedges, as long as each change only affects a single
relation of $\datagraph$. Such a restriction of $\querygraph$ translates to \emph{self-join free} acyclic conjunctive queries.

\begin{corollary}\label{corollay:self-join-free}
Let $\schema = \{E_1, \ldots, E_m\}$ be a fixed schema. As long as $\querygraph$ remains acyclic and contains at most one hyperedge $E_i(\tpl x)$ for each relation $E_i \in \schema$, the problem \ahh{\schema} can be maintained in \DynFO under insertions and deletions of single hyperedges of $\querygraph$ and under arbitrary changes of a single relation of $\datagraph$. 
\end{corollary}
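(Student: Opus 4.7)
\begin{proofsketch}
The plan is to extend the construction from the proof of Theorem~\ref{theorem:upperbound}. I would keep maintaining the join forest $J(\querygraph)$ via Theorem~\ref{theorem:jf} and the auxiliary relations $H_{ijk}$ exactly as there, and process changes to $\querygraph$ by the same procedure. Only the handling of $\datagraph$-changes requires a new argument.

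The key observation is that self-join-freeness forces $|\querygraph| \leq m = |\schema|$, since each relation symbol can occur in at most one hyperedge. Hence $J(\querygraph)$ always has at most $m$ nodes, a constant depending only on $\schema$. Consequently, a first-order formula can quantify over all nodes of $J(\querygraph)$ explicitly by using at most $m$ nested quantifier blocks, each ranging over tuples of arity bounded by $\maxarity$ over $\querygraph$'s domain.

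For a change that modifies a single relation $E_\ell$ of $\datagraph$ by arbitrarily many insertions and deletions, I would recompute each $H_{ijk}$ entry from scratch. A first-order formula picks an assignment for every hyperedge of $\querygraph$, uses the paths relations $P_{ijk}$ of the join forest to verify that the quantified hyperedges lie in the subtree region described by the tuple $(\tpl r,\tpl x_1,\tpl x_2)$, checks consistency on shared variables, and verifies membership of the induced tuples in the updated $\datagraph$. This is a direct \FO simulation of Yannakakis' bottom-up pass. Existence of a homomorphism is then read off from the derived relation $H'_{ii}$, as in the proof of Theorem~\ref{theorem:upperbound}.

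The main point—and the only place where self-join-freeness is used essentially—is the bound $|\querygraph| \leq m$, which fixes the quantifier-nesting depth needed in the update formulas. Without this restriction, a change to $E_\ell^\datagraph$ could simultaneously invalidate valid partial assignments at unboundedly many hyperedges of $\querygraph$ labelled $E_\ell$, which would propagate to unboundedly many $H_{ijk}$ tuples along ancestor paths of unbounded length—something no fixed \FO update formula can repair in a single step.
\end{proofsketch}
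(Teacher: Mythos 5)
Your proposal is correct, but it exploits the self-join-freeness hypothesis in a genuinely different way than the paper does. The paper stays inside the incremental framework of Theorem~\ref{theorem:upperbound}: it observes that a bulk change to a relation $E_i$ of $\datagraph$ affects the valid partial assignments of exactly \emph{one} node of the join forest, namely the unique hyperedge $E_i(\tpl x)$, and that these can be recomputed locally because the assignments of its children are untouched; the change is then propagated into the $H$-relations by the same lowest-common-ancestor splicing used for join-forest edge changes. You instead observe that self-join-freeness over a fixed schema bounds the number of hyperedges of $\querygraph$ by $m$, so the query has constant size and every $H_{ijk}$-entry --- indeed the answer itself --- is \FO-definable from scratch with constantly many (at most $m \cdot \maxarity$) quantified variables: guess an image tuple for each hyperedge, check consistency on shared nodes and membership in the current $\datagraph$. (Strictly speaking this is a direct guess-and-check of the homomorphism rather than a simulation of Yannakakis' bottom-up pass, which is only needed when the query is large; note also that with this observation acyclicity and the whole dynamic machinery become unnecessary for expressing the answer in the self-join-free case.) Your route is more elementary and makes the corollary nearly trivial; the paper's route is the one that pinpoints exactly where the general problem breaks under $\datagraph$-changes --- unboundedly many affected join-tree nodes --- and would continue to work in settings where each relation symbol occurs at most once but the query is not of constant size. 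Your closing paragraph correctly identifies that failure mode and matches the paper's discussion preceding the corollary.
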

\begin{proofsketch}
To adapt the proof of Theorem \ref{theorem:upperbound}, it suffices to show how after changing some relation $E_i$ of $\datagraph$ one can determine the valid partial assignments for the single node $E_i(\tpl x)$ in the join forest $J(\querygraph)$. As the valid partial assignments for its children did not change, this only involves to check for each tuple $\tpl y$ such that the hyperedge $E_i(\tpl y)$ exists in $\datagraph$ whether each child in $J(\querygraph)$ has a valid partial assignment that agrees with $\tpl y$ on all elements it has in common with $\tpl x$.
\end{proofsketch}

In the remainder of this section, we will see that if $\querygraph$ might be an arbitrary acyclic hypergraph, then a maintenance result for \ahh{\schema} under changes of $\datagraph$ is unlikely, even if in turn $\querygraph$ is not allowed to change.

Gottlob et al. \cite{gottlob2001complexity} show that it is \LOGCFL-complete to decide whether from a given acyclic hypergraph $\querygraph$ there is a homomorphism into a hypergraph $\datagraph$. The complexity class \LOGCFL contains all problems that can be reduced in logarithmic space to a context-free language. This class is contained in $\ACo$, contains $\NL$ and is equivalent to logspace-uniform $\SACo$ \cite{Venkateswaran91}, the class of problems decidable by logspace-uniform families of semi-unbounded Boolean circuits of polynomial size and logarithmic depth. A semi-unbounded Boolean circuit consists of or-gates with unbounded fan-in and and-gates with fan-in $2$. There are no negation gates, but for each input gate $x_i$ there is an additional input gate $\neg x_i$ that carries the negated value of $x_i$.

In their article, Gottlob et al. \cite{gottlob2001complexity} show that there is a schema $\schema$ such that every $\SACo$ problem can be reduced in logarithmic space to \ahh{\schema}.
We slightly adapt their construction and show that the hardness result also holds for \emph{bounded} logspace reductions. Furthermore, if a reduction $f$ maps an instance $x$ to an instance $f(x)$, then the change to $f(x)$ induced by a change to $x$ is first-order definable.  %

\begin{theorem}[{adapted from \cite[Theorem 4.8]{gottlob2001complexity}}]\label{theorem:reduction}
\begin{enumerate}[label=(\alph*),ref=(\alph*)]
\item 
\label{item:AHH_LOGCFL_hardness}
There is a schema $\schema$ that contains at
most binary relations such that \ahh{\schema}
is hard for \LOGCFL under logspace reductions.
\item 
\label{item:reduction_props}
Let $L \in \LOGCFL$. There is a logspace reduction $f_L$ from $L$ to \ahh{\schema} that satisfies the following properties.
Assume that $x,x'$ are instances of $L$ with $|x| =|x'|$ and let $(\querygraph, \datagraph) = f_L(x)$
and $(\querygraph', \datagraph') = f_L(x')$. Then:
\begin{enumerate}[label=(\roman*),ref=(\roman*)]
   \item 
\label{item:q_eq_q_prime}   
   $\querygraph = \querygraph'$,
  \item 
  \label{item:db_changes}   
  if $x$ and $x'$ differ only in one bit, then 
  $\datagraph'$ differs from $\datagraph$ by at most $c$ 
  hyperedges, for a global constant $c$, and
  \item 
  \label{item:new_db_def}  
  $\datagraph'$ is first-order definable from 
  $\datagraph$, $x$ and $x'$.
  \end{enumerate}
\end{enumerate}
\end{theorem}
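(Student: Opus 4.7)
The plan is to revisit the reduction of Gottlob et al.\ from \cite[Theorem~4.8]{gottlob2001complexity} and organise it so that the desired locality properties \ref{item:q_eq_q_prime}--\ref{item:new_db_def} become visible. For part~\ref{item:AHH_LOGCFL_hardness}, I would start from the identity $\LOGCFL = \SACo$ under logspace uniformity, due to Venkateswaran \cite{Venkateswaran91}. Given $L \in \LOGCFL$ and an input $x$ of length $n$, let $C_n$ be the associated semi-unbounded circuit of polynomial size and $O(\log n)$ depth; logarithmic depth ensures that the tree-unfolding of $C_n$ has polynomial size. The reduction produces a query hypergraph $\querygraph$ that mirrors this tree-unfolding and is therefore tree-like and in particular acyclic, together with a data hypergraph $\datagraph$ encoding the circuit wiring, the gate types and the input values. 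By representing each AND-gate by two hyperedges over a binary ``child'' relation and each OR-gate by a single hyperedge (so that the OR is realised on the data side through the multiple candidate incoming wires), the whole construction can be carried out with a schema $\schema$ containing only unary and binary relations. Existence of a homomorphism from $\querygraph$ to $\datagraph$ corresponds exactly to acceptance of $x$ by $C_n$, and logspace uniformity ensures that $(\querygraph,\datagraph)$ is computable from $x$ in logspace.

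For part~\ref{item:reduction_props}, the key step is to partition the relations of $\datagraph$ into two groups: a \emph{scaffolding} group (wiring, gate types, addresses for the nodes of the unfolded tree) that depends only on $C_n$ and hence only on $n$, and an \emph{input} group (unary relations recording which input literals are true) that depends on $x$. Property~\ref{item:q_eq_q_prime} is then immediate, since both $\querygraph$ and the scaffolding depend only on $n$, and $\querygraph$ in particular is built purely from the scaffolding side. For~\ref{item:db_changes}, flipping a single bit of $x$ changes the membership of at most two elements in the input-group relations (the two literals for the flipped position), so $\datagraph'$ differs from $\datagraph$ by a bounded number of hyperedges, which may be taken as the global constant $c$ in the statement. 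For~\ref{item:new_db_def}, a first-order formula with $\datagraph$, $x$ and $x'$ in its vocabulary can inspect each bit position, detect where $x$ and $x'$ disagree, and toggle the corresponding tuples in the input-group relations while leaving the scaffolding unchanged; this is a trivial bounded symmetric-difference update.

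The main obstacle I anticipate is verifying that the construction of \cite{gottlob2001complexity} can indeed be reshaped so that the input bits of $x$ influence \emph{only} the input-group relations of $\datagraph$. Their reduction is aimed at plain logspace many-one hardness and freely mixes the description of the circuit structure with the encoding of the input. I would walk through their construction step by step and push all circuit-structure information into the scaffolding relations, which is possible precisely because $C_n$ itself depends only on $n$ by logspace uniformity, and leave the input bits to control only a small, disjoint set of unary relations. Once this separation is achieved, conditions~\ref{item:q_eq_q_prime}--\ref{item:new_db_def} follow without further work. A secondary issue is keeping the schema binary while encoding a polynomial-size unfolded tree of logarithmic depth, but this is exactly what is already done in \cite{gottlob2001complexity}, so the adaptation only reorganises how information is split across $\querygraph$ and $\datagraph$ rather than changing the heart of the construction.
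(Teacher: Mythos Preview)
Your outline is essentially the paper's argument: start from $\LOGCFL=\text{logspace-uniform }\SACo$, let $\querygraph$ be a tree that depends only on $n$, let $\datagraph$ encode the circuit $C_n$ together with the values of the input literals, and observe that flipping one input bit only toggles a bounded number of unary facts in $\datagraph$. The paper uses the schema $\{\mtext{0},\mtext{1},\mtext{or},\mtext{and-left},\mtext{and-right}\}$ and does not need to ``reshape'' the Gottlob et al.\ reduction at all: the separation you call scaffolding versus input is already present, since $C_n$ depends only on $n$ and only the constants in $C_n(x)$ depend on $x$.

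One point you should correct: what you build is a \emph{proof tree}, not the \emph{tree-unfolding} of $C_n$. The sentence ``logarithmic depth ensures that the tree-unfolding of $C_n$ has polynomial size'' is false for semi-unbounded circuits, because OR-gates have unbounded fan-in and the unfolding can have $n^{\Theta(\log n)}$ nodes. The object you actually describe---two children per AND-gate, one child per OR-gate, constant-$1$ leaves---is the proof tree; it has polynomial size because the AND fan-in is $2$ and the depth is $O(\log n)$, and it depends only on the depth and alternation pattern of $C_n$, not on its wiring. With that fix, your argument and the paper's coincide. A minor addition for part~\ref{item:reduction_props}\ref{item:new_db_def}: the paper makes explicit that first-order definability of the toggled tuples relies on choosing the gate encoding so that the $i$-th input gate is (say) the $i$-th tuple in lexicographic order; you should state this rather than leave it implicit in ``trivial''.
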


\begin{proof}
Let $L$ be a problem from $\LOGCFL$. As $\LOGCFL = \text{logspace-uniform } \SACo$, there is a logspace-uniform family $(C_n)_{n \in \N}$ of circuits that decides $L$, where a circuit $C_n$ has size at most $n^k$ for some $k \in \N$, logarithmic depth in $n$, and the fan-in of every and-gate is bounded by $2$. Without loss of generality, see \cite[Lemma 4.6]{gottlob2001complexity}, we can assume that $C_n$ also has the following \emph{normal form}:
\begin{enumerate}[label=(\arabic*),ref=(\arabic*)]
 \item the circuit consists of layers of gates, and the gates of layer $i$ receive all their inputs from gates at layer $i-1$,
 \item all layers either only contain or-gates or only contain and-gates,
 \item the first layer after the inputs consists of or-gates, 
 \item if layer $i$ is a layer of or-gates, then layer $i+1$ only consists of and-gates, and vice versa, and
 \item the output gate is an and-gate.
\end{enumerate}
A circuit of this form accepts its input if and only if a proof tree can be homomorphically mapped into it. A \emph{proof tree} $T_n$ for a circuit $C_n$ in normal form has the same depth as $C_n$ and an and-gate as its root. Each and-gate of the proof tree has two or-gates as its children, and each or-gate has one child, which is a gate labelled with the constant $1$ for an or-gate at the lowest layer, and an and-gate for all other or-gates. Note that a proof tree is acyclic.

If there is a homomorphism that maps each constant $1$ of the proof tree to an input gate of the circuit that is set to $1$, each and-gate of the proof tree to an and-gate of the circuit, and for each and-gate of the proof tree its two children to different or-gates in the circuit, then all gates of the circuit that are in the image of the homomorphism evaluate to $1$ for the current input. Therefore, the output gate also evaluates to $1$, and the circuit accepts its input.
It is also clear that if the circuit accepts its input, then there is a homomorphism from the proof tree into the circuit.

We encode circuits and proof trees over the schema $\schema = \{\mtext{0}, \mtext{1}, \mtext{or},\allowbreak \mtext{and-left},\allowbreak \mtext{and-right}\}$.
Each gate $g$ is encoded by a tuple $\mtext{enc}(g)$ of $k$ elements.
If $g$ is an and-gate with children $g_1, g_2$, then this is encoded by tuples $(\mtext{enc}(g),\mtext{enc}(g_1))\in \mtext{and-left}$ and $(\mtext{enc}(g),\mtext{enc}(g_2))\in \mtext{and-right}$. If $g$ is an or-gate and $g'$ is one of its children, then this is encoded by the tuple $(\mtext{enc}(g),\mtext{enc}(g'))\in \mtext{or}$.
The relations $\mtext{0}$ and $\mtext{1}$ are used to encode constants and assignments of input gates in the obvious way.

We use the two relations $\mtext{and-left}, \mtext{and-right}$ to ensure that a homomorphism from a proof tree to a circuit maps the two children of an and-gate to two different or-gates.

From the proof of \cite[Theorem 4.8]{gottlob2001complexity} it follows that from an input $x$ of $L$ with $|x| = n$ the corresponding circuit $C_n(x)$, which results from $C_n$ by assigning constants to its inputs gates as specified by $x$, and the corresponding proof tree $T_n$ can be computed in logarithmic space. In conclusion, this proves that the function $f_L$ that maps $x$ to $(T_n, C_n(x))$ is a logspace reduction from $L$ to \ahh{\schema}, and therefore that \ahh{\schema} is hard for \LOGCFL under logspace reductions.

We now proceed to prove part~\ref{item:reduction_props} of the theorem statement.
Consider two input instances $x, x'$ for $L$ with $|x| = |x'| = n$. Both $x$ and $x'$ are inputs of the circuit $C_n$, so the same proof tree is constructed for them by $f_L$, yielding part~\ref{item:reduction_props}\ref{item:q_eq_q_prime}. The only differences in the images of $f_L$ are the assignments of constants to the input gates of $C_n$. 
If $x$ and $x'$ only differ in one bit, say, the first bit that is represented by the input gate $g_1$, then we have $\mtext{enc}(g_1) \in \mtext{0}$ and $\mtext{enc}(\neg g_1) \in \mtext{1}$ for one input, and $\mtext{enc}(g_1) \in \mtext{1}$ and $\mtext{enc}(\neg g_1) \in \mtext{0}$ for the other input. So, the encodings of the circuit only differ by $4$ tuples, which implies part~\ref{item:reduction_props}\ref{item:db_changes}.
Towards part~\ref{item:reduction_props}\ref{item:new_db_def}, we can ensure that these tuples are first-order definable by using an appropriate encoding $\mtext{enc}$ of the gates, for example by encoding the $i$-th input gate by the $i$-th tuple in the lexicographic ordering of $k$-tuples over the domain.
\end{proof}

Building on the hardness result of Theorem \ref{theorem:reduction}, we can show that if \ahh{\schema} can be maintained in \DynFO under changes of $\datagraph$, then all \LOGCFL-problems are in \DynFO if we allow a \PTIME initialisation.
This would be a breakthrough result, as there are already problems in uniform $\ACz[2]$ (problems decidable by uniform circuits with polynomial size, constant depth and not-, and-, or- and \mbox{modulo 2}-gates with arbitrary fan-in), a much smaller complexity class, that we do not know how to maintain in \DynFO \cite{VortmeierZ20}.

\begin{theorem}\label{theorem:lowerbound}
If for arbitrary schema $\schema$ the problem \ahh{\schema} can be maintained in \DynFO under insertions and deletions of single hyperedges from $\querygraph$ and $\datagraph$, as long as $\querygraph$ stays acyclic, then every problem $L \in \LOGCFL$ can be maintained in \DynFO with \PTIME initialisation under insertions and deletions of single tuples.
\end{theorem}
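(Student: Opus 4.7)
The plan is to use the hypothetical dynamic program $\prog$ for \ahh{\schema} as a black box, together with the reduction $f_L$ from Theorem~\ref{theorem:reduction}, to construct for any fixed $L \in \LOGCFL$ a dynamic program $\prog_L$ that maintains $L$ in \DynFO with \PTIME initialisation. Given an input $x$ of $L$ on a domain of size $n$, I would view $x$ via the linear order as a bit string of polynomial length, so that a single tuple insertion or deletion in $x$ corresponds to a single bit flip. The resulting instance $(\querygraph, \datagraph) = f_L(x)$ then enjoys two key properties guaranteed by Theorem~\ref{theorem:reduction}: $\querygraph$ is an acyclic proof tree depending only on $n$, and each bit flip in $x$ induces at most a constant number $c$ of hyperedge changes in $\datagraph$ which are moreover first-order definable from $\datagraph$, $x$ and $x'$.

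The program $\prog_L$ stores as auxiliary information copies of $\querygraph$ and the current $\datagraph$, together with all auxiliary relations maintained by $\prog$ for the instance $(\querygraph, \datagraph)$; its answer relation is simply the answer relation of $\prog$, since by correctness of $f_L$ the answer to $L$ on $x$ equals the answer to \ahh{\schema} on $f_L(x)$. For the \PTIME initialisation, given $x_0$ I compute $f_L(x_0) = (\querygraph, \datagraph_0)$ (possible in logspace, hence \PTIME) and then simulate $\prog$ on a sequence of hyperedge insertions that builds $(\querygraph, \datagraph_0)$ from empty structures; each update step is \FO-computable and only polynomially many steps are needed, so the total cost is polynomial in $n$. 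To process a single tuple change of $x$ to $x'$, first-order formulas determine the at most $c$ corresponding hyperedge changes of $\datagraph$ using the definability clause of Theorem~\ref{theorem:reduction}, and then the update formulas of $\prog$ are applied in sequence, exploiting the standard nesting idea that \DynFO is closed under bounded first-order reductions. Since $\querygraph$ never changes, $\prog$'s acyclicity precondition is trivially preserved throughout.

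The main obstacle is to translate a single tuple change of $x$ into a bounded number of hyperedge changes of $\datagraph$ using only first-order update formulas, and this is precisely what the first-order definability clause of Theorem~\ref{theorem:reduction} supplies; everything else reduces to \PTIME simulation for the initialisation and closure of \DynFO under bounded first-order reductions for the online updates, both of which are standard.
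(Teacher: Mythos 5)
Your proposal is correct and follows essentially the same route as the paper's own proof: a \PTIME initialisation that computes $f_L$ on the initial input and simulates $\prog$ on a sequence of insertions building the instance, followed by online updates that use the first-order definability and boundedness clauses of Theorem~\ref{theorem:reduction} to translate each tuple change into constantly many hyperedge changes of $\datagraph$, handled by nesting the update formulas of $\prog$. The observation that $\querygraph$ never changes, so acyclicity is trivially preserved, also matches the paper's argument.
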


The same even holds under the condition that \ahh{\schema} can only be maintained under changes of single hyperedges of $\datagraph$, but starting from an arbitrary initial acyclic hypergraph $\querygraph$, even if a \PTIME initialisation of the auxiliary relations is allowed. So, we can take this theorem as a strong indicator that \ahhNA might not be in \DynFO under changes of $\datagraph$.

\begin{proof}
Let $L \in \LOGCFL$ be arbitrary. Let $\schema$ be the schema and $f_L$ the reduction guaranteed to exist by Theorem \ref{theorem:reduction} such that $f_L$ is a reduction from $L$ to \ahh{\schema}.
Let $\prog$ be a dynamic program that maintains \ahh{\schema} under insertions and deletions of single hyperedges.
We construct a dynamic program $\prog'$ with \PTIME initialisation that maintains $L$.

For an initially empty input structure $\inp$ over a domain of size $n$, the initialisation first constructs the corresponding $\SACo$-circuit $C_n(\inp)$, with the input bits set as given by $\inp$, and the proof tree $T_n$ and stores them in auxiliary relations. This is possible in $\LOGSPACE \subseteq \PTIME$. Then, using polynomial time, it simulates $\prog$ for a sequence of insertions that lead to $C_n(\inp)$ and $T_n$ from initially empty hypergraphs and stores the produced auxiliary relations. 

When a change of $\inp$ occurs, $\prog'$ identifies the constantly many changes of $C_n(\inp)$ that are induced by the change, which is possible in first-order logic thanks to Theorem \ref{theorem:reduction}, and simulates $\prog$ for these changes.
\end{proof}

    \section{Conclusion and Further Work}\label{section:conclusion}
In this paper we studied under which conditions the problem \ahhNA can be maintained in \DynFO.
Our main result is that this problem is in \DynFO under changes of single hyperedges of the query 
hypergraph $\querygraph$, on the condition that it remains
acyclic. 
This result directly implies that the result of acyclic Boolean conjunction queries can be maintained in \DynFO. As the corresponding dynamic program, see proof of Theorem \ref{theorem:upperbound}, also maintains partial assignments of existing homomorphisms, this can straightforwardly be extended also to non-Boolean acyclic conjunctive queries.

We have also seen that it is unlikely that \ahhNA is in \DynFO under changes of the data hypergraph $\datagraph$.

In the static setting, the homomorphism problem is not only tractable for acyclic hypergraphs $\querygraph$, but for a larger class of graphs \cite{DalmauKV02} which includes the class of graphs with \emph{bounded treewidth}, see \cite{gottlob2001complexity}.
It is therefore interesting whether our \DynFO maintenance result can also be extended to allow for cyclic hypergraphs~$\querygraph$, in particular to allow hypergraphs of treewidth at most $k$, for some~$k$.
Results of this form would probably require an analogous result to Theorem \ref{theorem:jf}, so, that a tree decomposition of some width $f(k)$ can be maintained for every hypergraph with treewidth at most $k$, and that any change of the hypergraph leads to a maintained tree decomposition that can be obtained from its previous version by a constant number of changes.

Outside the \DynFO framework, maintenance of tree decompositions for graphs with treewidth $k=2$, that is, series-parallel graphs, is considered in \cite{Bodlaender93a}, but a change of the graph may affect a logarithmic number of nodes of the tree decomposition.
Preliminary unpublished results show (using different techniques than \cite{Bodlaender93a}) that for graphs with treewidth $2$ a tree decomposition can indeed be maintained in \DynFO. It is so far unclear whether tree decompositions can also be maintained in a way that only a constant-size part changes after a change of the graph.

\subsection*{Acknowledgements.} This project has received funding
from the European Union's Horizon 2020 research and innovation
programme under grant agreement No 682588.    

\bibliography{bibliography}

\end{document}